\newtheorem{theorem}{Theorem}
\newtheorem{proposition}[theorem]{Proposition}
\theoremstyle{definition}
\theoremstyle{remark}
\newtheorem{example}{Example}
\begin{document}
\newcommand{\M}{\mathcal{M}}
\newcommand{\F}{\mathcal{F}}

\newcommand{\Teich}{\mathcal{T}_{g,N+1}^{(1)}}
\newcommand{\T}{\mathrm{T}}
\newcommand{\corr}{\bf}
\newcommand{\vac}{|0\rangle}
\newcommand{\Ga}{\Gamma}
\newcommand{\new}{\bf}
\newcommand{\define}{\def}
\newcommand{\redefine}{\def}
\newcommand{\Cal}[1]{\mathcal{#1}}
\renewcommand{\frak}[1]{\mathfrak{{#1}}}
\newcommand{\Hom}{\rm{Hom}\,}
\newcommand{\refE}[1]{(\ref{E:#1})}
\newcommand{\refCh}[1]{Chapter~\ref{Ch:#1}}
\newcommand{\refS}[1]{Section~\ref{S:#1}}
\newcommand{\refSS}[1]{Section~\ref{SS:#1}}
\newcommand{\refT}[1]{Theorem~\ref{T:#1}}
\newcommand{\refO}[1]{Observation~\ref{O:#1}}
\newcommand{\refP}[1]{Proposition~\ref{P:#1}}
\newcommand{\refD}[1]{Definition~\ref{D:#1}}
\newcommand{\refC}[1]{Corollary~\ref{C:#1}}
\newcommand{\refL}[1]{Lemma~\ref{L:#1}}
\newcommand{\R}{\ensuremath{\mathbb{R}}}
\newcommand{\C}{\ensuremath{\mathbb{C}}}
\newcommand{\N}{\ensuremath{\mathbb{N}}}
\newcommand{\Q}{\ensuremath{\mathbb{Q}}}
\renewcommand{\P}{\ensuremath{\mathcal{P}}}
\newcommand{\Z}{\ensuremath{\mathbb{Z}}}
\newcommand{\kv}{{k^{\vee}}}
\renewcommand{\l}{\lambda}
\newcommand{\gb}{\overline{\mathfrak{g}}}
\newcommand{\dt}{\tilde d}     
\newcommand{\hb}{\overline{\mathfrak{h}}}
\newcommand{\g}{\mathfrak{g}}
\newcommand{\h}{\mathfrak{h}}
\newcommand{\gh}{\widehat{\mathfrak{g}}}
\newcommand{\ghN}{\widehat{\mathfrak{g}_{(N)}}}
\newcommand{\gbN}{\overline{\mathfrak{g}_{(N)}}}
\newcommand{\tr}{\mathrm{tr}}
\newcommand{\gln}{\mathfrak{gl}(n)}
\newcommand{\son}{\mathfrak{so}(n)}
\newcommand{\spnn}{\mathfrak{sp}(2n)}
\newcommand{\sln}{\mathfrak{sl}}
\newcommand{\sn}{\mathfrak{s}}
\newcommand{\so}{\mathfrak{so}}
\newcommand{\spn}{\mathfrak{sp}}
\newcommand{\tsp}{\mathfrak{tsp}(2n)}
\newcommand{\gl}{\mathfrak{gl}}
\newcommand{\slnb}{{\overline{\mathfrak{sl}}}}
\newcommand{\snb}{{\overline{\mathfrak{s}}}}
\newcommand{\sob}{{\overline{\mathfrak{so}}}}
\newcommand{\spnb}{{\overline{\mathfrak{sp}}}}
\newcommand{\glb}{{\overline{\mathfrak{gl}}}}
\newcommand{\Hwft}{\mathcal{H}_{F,\tau}}
\newcommand{\Hwftm}{\mathcal{H}_{F,\tau}^{(m)}}

\newcommand{\car}{{\mathfrak{h}}}    
\newcommand{\bor}{{\mathfrak{b}}}    
\newcommand{\nil}{{\mathfrak{n}}}    
\newcommand{\vp}{{\varphi}}
\newcommand{\bh}{\widehat{\mathfrak{b}}}  
\newcommand{\bb}{\overline{\mathfrak{b}}}  
\newcommand{\Vh}{\widehat{\mathcal V}}
\newcommand{\KZ}{Kniz\-hnik-Zamo\-lod\-chi\-kov}
\newcommand{\TUY}{Tsuchia, Ueno  and Yamada}
\newcommand{\KN} {Kri\-che\-ver-Novi\-kov}
\newcommand{\pN}{\ensuremath{(P_1,P_2,\ldots,P_N)}}
\newcommand{\xN}{\ensuremath{(\xi_1,\xi_2,\ldots,\xi_N)}}
\newcommand{\lN}{\ensuremath{(\lambda_1,\lambda_2,\ldots,\lambda_N)}}
\newcommand{\iN}{\ensuremath{1,\ldots, N}}
\newcommand{\iNf}{\ensuremath{1,\ldots, N,\infty}}

\newcommand{\tb}{\tilde \beta}
\newcommand{\tk}{\tilde \varkappa}
\newcommand{\ka}{\kappa}
\renewcommand{\k}{\varkappa}
\newcommand{\ce}{{c}}

\newcommand{\Pif} {P_{\infty}}
\newcommand{\Pinf} {P_{\infty}}
\newcommand{\PN}{\ensuremath{\{P_1,P_2,\ldots,P_N\}}}
\newcommand{\PNi}{\ensuremath{\{P_1,P_2,\ldots,P_N,P_\infty\}}}
\newcommand{\Fln}[1][n]{F_{#1}^\lambda}
\newcommand{\tang}{\mathrm{T}}
\newcommand{\Kl}[1][\lambda]{\can^{#1}}
\newcommand{\A}{\mathcal{A}}
\newcommand{\U}{\mathcal{U}}
\newcommand{\V}{\mathcal{V}}
\newcommand{\W}{\mathcal{W}}
\renewcommand{\O}{\mathcal{O}}
\newcommand{\Ae}{\widehat{\mathcal{A}}}
\newcommand{\Ah}{\widehat{\mathcal{A}}}
\newcommand{\La}{\mathcal{L}}
\newcommand{\Le}{\widehat{\mathcal{L}}}
\newcommand{\Lh}{\widehat{\mathcal{L}}}
\newcommand{\eh}{\widehat{e}}
\newcommand{\Da}{\mathcal{D}}
\newcommand{\kndual}[2]{\langle #1,#2\rangle}
\newcommand{\cins}{\frac 1{2\pi\mathrm{i}}\int_{C_S}}
\newcommand{\cinsl}{\frac 1{24\pi\mathrm{i}}\int_{C_S}}
\newcommand{\cinc}[1]{\frac 1{2\pi\mathrm{i}}\int_{#1}}
\newcommand{\cintl}[1]{\frac 1{24\pi\mathrm{i}}\int_{#1 }}
\newcommand{\w}{\omega}
\newcommand{\ord}{\operatorname{ord}}
\newcommand{\res}{\operatorname{res}}
\newcommand{\nord}[1]{:\mkern-5mu{#1}\mkern-5mu:}
\newcommand{\codim}{\operatorname{codim}}
\newcommand{\ad}{\operatorname{ad}}
\newcommand{\Ad}{\operatorname{Ad}}
\newcommand{\supp}{\operatorname{support}}

\newcommand{\Fn}[1][\lambda]{\mathcal{F}^{#1}}
\newcommand{\Fl}[1][\lambda]{\mathcal{F}^{#1}}
\renewcommand{\Re}{\mathrm{Re}}

\newcommand{\ha}{H^\alpha}

\define\ldot{\hskip 1pt.\hskip 1pt}
\define\ifft{\qquad\text{if and only if}\qquad}
\define\a{\alpha}
\redefine\d{\delta}
\define\w{\omega}
\define\ep{\epsilon}
\redefine\b{\beta} \redefine\t{\tau} \redefine\i{{\,\mathrm{i}}\,}
\define\ga{\gamma}
\define\cint #1{\frac 1{2\pi\i}\int_{C_{#1}}}
\define\cintta{\frac 1{2\pi\i}\int_{C_{\tau}}}
\define\cintt{\frac 1{2\pi\i}\oint_{C}}
\define\cinttp{\frac 1{2\pi\i}\int_{C_{\tau'}}}
\define\cinto{\frac 1{2\pi\i}\int_{C_{0}}}
\define\cinttt{\frac 1{24\pi\i}\int_C}
\define\cintd{\frac 1{(2\pi \i)^2}\iint\limits_{C_{\tau}\,C_{\tau'}}}
\define\dintd{\frac 1{(2\pi \i)^2}\iint\limits_{C\,C'}}
\define\cintdr{\frac 1{(2\pi \i)^3}\int_{C_{\tau}}\int_{C_{\tau'}}
\int_{C_{\tau''}}}
\define\im{\operatorname{Im}}
\define\re{\operatorname{Re}}
\define\res{\operatorname{res}}
\redefine\deg{\operatornamewithlimits{deg}}
\define\ord{\operatorname{ord}}
\define\rank{\operatorname{rank}}
\define\fpz{\frac {d }{dz}}
\define\dzl{\,{dz}^\l}
\define\pfz#1{\frac {d#1}{dz}}

\define\K{\Cal K}
\define\U{\Cal U}
\redefine\O{\Cal O}
\define\He{\text{\rm H}^1}
\redefine\H{{\mathrm{H}}}
\define\Ho{\text{\rm H}^0}
\define\A{\Cal A}
\define\Do{\Cal D^{1}}
\define\Dh{\widehat{\mathcal{D}}^{1}}
\redefine\L{\Cal L}
\newcommand{\ND}{\ensuremath{\mathcal{N}^D}}
\redefine\D{\Cal D^{1}}
\define\KN {Kri\-che\-ver-Novi\-kov}
\define\Pif {{P_{\infty}}}
\define\Uif {{U_{\infty}}}
\define\Uifs {{U_{\infty}^*}}
\define\KM {Kac-Moody}
\define\Fln{\Cal F^\lambda_n}
\define\gb{\overline{\mathfrak{ g}}}
\define\G{\overline{\mathfrak{ g}}}
\define\Gb{\overline{\mathfrak{ g}}}
\redefine\g{\mathfrak{ g}}
\define\Gh{\widehat{\mathfrak{ g}}}
\define\gh{\widehat{\mathfrak{ g}}}
\define\Ah{\widehat{\Cal A}}
\define\Lh{\widehat{\Cal L}}
\define\Ugh{\Cal U(\Gh)}
\define\Xh{\hat X}
\define\Tld{...}
\define\iN{i=1,\ldots,N}
\define\iNi{i=1,\ldots,N,\infty}
\define\pN{p=1,\ldots,N}
\define\pNi{p=1,\ldots,N,\infty}
\define\de{\delta}

\define\kndual#1#2{\langle #1,#2\rangle}
\define \nord #1{:\mkern-5mu{#1}\mkern-5mu:}
\newcommand{\MgN}{\mathcal{M}_{g,N}} 
\newcommand{\MgNeki}{\mathcal{M}_{g,N+1}^{(k,\infty)}} 
\newcommand{\MgNeei}{\mathcal{M}_{g,N+1}^{(1,\infty)}} 
\newcommand{\MgNekp}{\mathcal{M}_{g,N+1}^{(k,p)}} 
\newcommand{\MgNkp}{\mathcal{M}_{g,N}^{(k,p)}} 
\newcommand{\MgNk}{\mathcal{M}_{g,N}^{(k)}} 
\newcommand{\MgNekpp}{\mathcal{M}_{g,N+1}^{(k,p')}} 
\newcommand{\MgNekkpp}{\mathcal{M}_{g,N+1}^{(k',p')}} 
\newcommand{\MgNezp}{\mathcal{M}_{g,N+1}^{(0,p)}} 
\newcommand{\MgNeep}{\mathcal{M}_{g,N+1}^{(1,p)}} 
\newcommand{\MgNeee}{\mathcal{M}_{g,N+1}^{(1,1)}} 
\newcommand{\MgNeez}{\mathcal{M}_{g,N+1}^{(1,0)}} 
\newcommand{\MgNezz}{\mathcal{M}_{g,N+1}^{(0,0)}} 
\newcommand{\MgNi}{\mathcal{M}_{g,N}^{\infty}} 
\newcommand{\MgNe}{\mathcal{M}_{g,N+1}} 
\newcommand{\MgNep}{\mathcal{M}_{g,N+1}^{(1)}} 
\newcommand{\MgNp}{\mathcal{M}_{g,N}^{(1)}} 
\newcommand{\Mgep}{\mathcal{M}_{g,1}^{(p)}} 
\newcommand{\MegN}{\mathcal{M}_{g,N+1}^{(1)}} 

\define \sinf{{\widehat{\sigma}}_\infty}
\define\Wt{\widetilde{W}}
\define\St{\widetilde{S}}
\newcommand{\SigmaT}{\widetilde{\Sigma}}
\newcommand{\hT}{\widetilde{\frak h}}
\define\Wn{W^{(1)}}
\define\Wtn{\widetilde{W}^{(1)}}
\define\btn{\tilde b^{(1)}}
\define\bt{\tilde b}
\define\bn{b^{(1)}}
\define \ainf{{\frak a}_\infty} 

%
\define\eps{\varepsilon}    
\newcommand{\e}{\varepsilon}
\define\doint{({\frac 1{2\pi\i}})^2\oint\limits _{C_0}
       \oint\limits _{C_0}}                            
\define\noint{ {\frac 1{2\pi\i}} \oint}   
\define \fh{{\frak h}}     
\define \fg{{\frak g}}     
\define \GKN{{\Cal G}}   
\define \gaff{{\hat\frak g}}   
\define\V{\Cal V}
\define \ms{{\Cal M}_{g,N}} 
\define \mse{{\Cal M}_{g,N+1}} 
\define \tOmega{\Tilde\Omega}
\define \tw{\Tilde\omega}
\define \hw{\hat\omega}
\define \s{\sigma}
\define \car{{\frak h}}    
\define \bor{{\frak b}}    
\define \nil{{\frak n}}    
\define \vp{{\varphi}}
\define\bh{\widehat{\frak b}}  
\define\bb{\overline{\frak b}}  
\define\KZ{Knizhnik-Zamolodchikov}
\define\ai{{\alpha(i)}}
\define\ak{{\alpha(k)}}
\define\aj{{\alpha(j)}}
\newcommand{\calF}{{\mathcal F}}
\newcommand{\ferm}{{\mathcal F}^{\infty /2}}
\newcommand{\Aut}{\operatorname{Aut}}
\newcommand{\End}{\operatorname{End}}
\newcommand{\laxgl}{\overline{\mathfrak{gl}}}
\newcommand{\laxsl}{\overline{\mathfrak{sl}}}
\newcommand{\laxso}{\overline{\mathfrak{so}}}
\newcommand{\laxsp}{\overline{\mathfrak{sp}}}
\newcommand{\laxs}{\overline{\mathfrak{s}}}
\newcommand{\laxg}{\overline{\frak g}}
\newcommand{\bgl}{\laxgl(n)}
\newcommand{\tX}{\widetilde{X}}
\newcommand{\tY}{\widetilde{Y}}
\newcommand{\tZ}{\widetilde{Z}}


\title[Integrable systems of algebraic origin]{Some integrable systems of algebraic origin and separation of variables}
\author[O.K. Sheinman]{O.K. Sheinman}
\address{Department of Geometry and Topology, Steklov Mathematical Institute, Moscow}

\maketitle



The coefficients of an algebraic plane curve passing through a given set of points Poisson commute with respect to each pair of coordinates corresponding to the given points. This fact has been observed in \cite{BT1,BT2} in the context of separation of variables for non-linear integrable systems. It is a particular case of this statement that the coefficients of the Lagrange interpolation polynomial commute with respect to Poisson brackets of the similar type on interpolation data. This fact has been given with an independent proof in \cite{Sh_arx2017} where the corresponding integrable system had emerged as a reduction of a rank 2 hyperelliptic Hitchin system (see also \cite{Sh_DAN2018}).

The aim of the present note is to draw attention to the above, and other integrable systems of pure algebraic origin, not necessary related to curves, and derive them in the general set-up of separation of variables. The method we use shows that separation of variables is applicable for generating the integrable systems while conventionally it starts from given ones.

We consider a system of equations
\begin{equation}\label{E:non_sys}
F_i(H_1,\ldots,H_n,a_i,b_i)=0,\quad i=1,\ldots,n
\end{equation}
where $F_i$ are given smooth functions, $H_j=H_j(a_1,\ldots,a_n,b_1,\ldots,b_n)$ are unknown smooth functions ($i,j=1,\ldots,n$) in complex (resp., real) variables. It is important that for any $i=1,\ldots,n$ the function $F_i$ explicitly depends on only one pair of variables $a_i,b_i$ (however it may depend on the remainder of variables via $H_1,\ldots,H_n$).

An important particular case is given by linear systems. Let $f(x,y)=\left( f_{ij}(x,y) \right)_{i=0,1,\ldots,n}^{j=0,1,\ldots,n}$ be a differentiable matrix-valued function of two real or complex variables. Consider the following system of linear equations with unknowns $H_1,\ldots,H_n$:
\begin{equation}\label{E:bas_sys}
   \sum_{j=1}^n f_{ij}(a_i,b_i)H_j=f_{i0}(a_i,b_i),\quad i=1,\ldots,n
\end{equation}
defined in the domain in $\C^{2n}$ where the matrix of the system with eliminated $k$-th equation has rank $n-1$ for all or some of $k:\, 1\le k\le n$. We think of the following basic examples with relation to the system~\refE{bas_sys}.
\begin{example}
Given $n$ pairs $(a_i,b_i)$ of real (complex) numbers ($i=1,\ldots,n$), the problem of finding the degree $n-1$ polynomial taking the value $b_i$ at the point $a_i$ for all $i=1,\ldots,n$ descends to the system of type \refE{bas_sys} with $f_{ij}(x,y)=x^{j-1}$ for $j\ne 0$, $f_{i0}(x,y)=y$ for all $i=1,\ldots,n$ (it can be also directly written in the form of Lagrange interpolation polynomial). It follows from \refP{comm} below that we may don't restrict ourselves with just a polynomial of degree $n-1$ given by $n$ pairs of numbers. We may consider polynomials of an arbitrary degree with only $n$ nonzero coefficients, and  Laurent polynomials.
\end{example}
\begin{example}
Consider a plane algebraic curve and write its equation in the form $\sum\limits_{j=1}^dH_jx^{p_j}y^{q_j}=f_0(x,y)$. We regard the coefficients on the right hand side of the equation as given, and the coefficients on the left hand side as subjected to finding from the requirement that the curve passes through $d$ points $(a_1,b_1),\ldots,(a_d,b_d)$. This leads to the system of equations of type \refE{bas_sys} with $f_{ij}(x,y)=x^{p_j}y^{q_j}$, $f_{i0}(x,y)=f_0(x,y)$ (with $f_{ij}$ independent of $i$). For example take  the Weierstra\ss\ model of an affine curve \cite{Buchstaber}. It is given by a pair of coprime natural numbers $(n,s)$ by means of the equation
\[
 y^n=x^s+\sum_{q>0}\l_q\sum_{\substack{ns-ni-sj=q\\ i,j\ge 0}}x^iy^j.
\]
Let $d$ be the number of different values of $q$ such that $q=ns-ni-sj>0$, $i,j\ge 0$. Given $d$ points $(a_k,b_k)\in\C^2$, $k=1,\ldots,d$ define the coefficients $\l_q$ from the requirement that our curve passes through all the points $(a_k,b_k)$. It descends to the system of the type \refE{bas_sys} with $n=d$,
\[
f_{k,q}(x,y)=\sum_{\substack{ns-ni-sj=q\\ i,j\ge 0}}x^iy^j \ (q>0), \ f_{k,0}(x,y)=y^n-x^s,\ k=1,\ldots,d.
\]
\end{example}
\begin{example}
Consider a polynomial given by its values at certain points, and by the values of its derivatives of arbitrary orders at the other, mutually different, points, so that there is a balance between unknown coefficients and given pairs of the type \emph{point--value}. It is a kind of the Hermit interpolation polynomial, in the classical one the values of the polynomial itself and its derivatives are given at the same points. It is easy to see that in this case we obtain a system of the type \refE{bas_sys} where $f_{ij}$ do depend on $i$ (in contrary to previous examples).
\end{example}
We consider a Poisson bracket of the form
\[
\{ f,g\}=\sum\limits_{j=1}^np_j\left(\frac{\partial f}{\partial a_j}\frac{\partial g}{\partial b_j}-\frac{\partial g}{\partial a_j}\frac{\partial f}{\partial b_j}\right)
\]
where $p_j=p_j(a_j,b_j)$ are smooth functions in only one pair of variables (to satisfy the Jacobi identity). For $p_j\equiv 1$ for all $j=1,\ldots,n$ we obtain the \emph{canonical Poisson bracket}. In the case $p_j\equiv 0$, $j\ne k$ we obtain a bracket corresponding to the pair $(a_k,b_k)$ of coordinates  (in \cite{BT1,BT2} $p_j$ is the same for all $j$ though it is of no importance for the proof given there).
\begin{proposition}\label{P:rels1}
The components $H_1,\ldots,H_n$ of the solution to \refE{bas_sys} considered as functions in $a_1,\ldots,a_n$, $b_1,\ldots,b_n$ defined in the domain in $\C^{2n}$ where $\rank\frac{\partial(F_1,\ldots,F_{k-1},F_{k+1}\ldots,F_n)} {\partial(H_1,\ldots,H_{k-1},H_k,H_{k+1}\ldots,H_n)}=n-1$
for a $k:\, 1\le k\le n$ satisfy the following relations:
\newline
$1^\circ$
if the vector $(\frac{\partial H_1}{\partial b_k},\ldots,\frac{\partial H_n}{\partial b_k})$ is not identically equal to $0$ then
\begin{equation}\label{E:rels2}
  \frac{\partial H_j}{\partial a_k}=M_k\frac{\partial H_j}{\partial b_k},\
  j=1,\ldots,n;
\end{equation}
\newline
$2^\circ$ for a $j_k$ such that $\det \frac{\partial(F_1,\ldots,F_{k-1},F_{k+1}\ldots,F_n)} {\partial(H_1,\ldots,H_{j_k-1},H_{j_k+1}\ldots,H_n)}\ne 0$
\begin{equation}\label{E:rels2}
  \frac{\partial H_j}{\partial a_k}=A_{jkj_k}\frac{\partial H_{j_k}}{\partial a_k},\
  \frac{\partial H_j}{\partial b_k}=A_{jkj_k}\frac{\partial H_{j_k}}{\partial b_k},\
  j=1,\ldots,n
\end{equation}
where $M_k$ and $A_{jkj_k}$ are functions in $a_1,\ldots,a_n$, $b_1,\ldots,b_n$.
\end{proposition}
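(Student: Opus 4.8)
The plan is to differentiate the defining relations. Writing $F_i=\sum_{j=1}^n f_{ij}(a_i,b_i)H_j-f_{i0}(a_i,b_i)$, so that \refE{bas_sys} becomes $F_i=0$ and $\partial F_i/\partial H_j=f_{ij}(a_i,b_i)$, I would differentiate each equation $F_i=0$ with respect to $a_k$ and with respect to $b_k$. The point is that for $i\ne k$ the function $F_i$ does not depend on $a_k,b_k$ \emph{explicitly}; it depends on them only through $H_1,\ldots,H_n$. Hence the chain rule gives, for every $i\ne k$,
\[
\sum_{j=1}^n \frac{\partial F_i}{\partial H_j}\,\frac{\partial H_j}{\partial a_k}=0,\qquad
\sum_{j=1}^n \frac{\partial F_i}{\partial H_j}\,\frac{\partial H_j}{\partial b_k}=0 .
\]

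The key observation is that these identities say exactly that the two vectors $v_a=(\partial H_1/\partial a_k,\ldots,\partial H_n/\partial a_k)$ and $v_b=(\partial H_1/\partial b_k,\ldots,\partial H_n/\partial b_k)$ both lie in the kernel of the $(n-1)\times n$ matrix $J_k=\big(\partial F_i/\partial H_j\big)_{i\ne k,\ 1\le j\le n}$. By the hypothesis $\rank J_k=n-1$, this kernel is one-dimensional throughout the domain in question, so $v_a$ and $v_b$ are collinear there. This single fact is the heart of the proposition.

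From it, item $1^\circ$ is immediate: on the open subset where $v_b\ne 0$, collinearity means $v_a=M_k v_b$ for a scalar function $M_k$ of $a_1,\ldots,a_n,b_1,\ldots,b_n$ (obtained as the ratio of $v_a$ to $v_b$ in any component where $v_b\ne 0$; one-dimensionality of $\ker J_k$ guarantees this ratio does not depend on which component is chosen). For item $2^\circ$, I would choose $j_k$ so that the $(n-1)\times(n-1)$ minor of $J_k$ with the $j_k$-th column deleted is nonzero, which is possible precisely because $\rank J_k=n-1$, and invoke the classical Cramer description of the kernel: $\ker J_k$ is spanned by the vector $w$ whose $j$-th component equals, up to a common sign, the maximal minor of $J_k$ obtained by deleting the $j$-th column. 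By the choice of $j_k$ one has $w_{j_k}\ne 0$, so writing $v_a=\alpha w$, $v_b=\beta w$ and setting $A_{jkj_k}=w_j/w_{j_k}$ yields $\partial H_j/\partial a_k=A_{jkj_k}\,\partial H_{j_k}/\partial a_k$ and, with the \emph{same} coefficient, $\partial H_j/\partial b_k=A_{jkj_k}\,\partial H_{j_k}/\partial b_k$ for all $j$, which is item $2^\circ$.

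I do not expect a genuine obstacle; the only care required concerns domains of definition. One must stay inside the open set where $\rank J_k=n-1$, so that $\ker J_k$ is honestly a line and the relevant minors, hence $M_k$ and $A_{jkj_k}$, depend smoothly on the variables; and in $1^\circ$ one must restrict to the open set where $v_b$ does not vanish. It should also be noted at the outset that on such a domain $H_1,\ldots,H_n$ are well-defined smooth functions of $(a,b)$ (for the linear system \refE{bas_sys}, wherever the full $n\times n$ coefficient matrix is invertible), so that the partial derivatives written above make sense. The same argument, word for word, applies to the general system \refE{non_sys} in place of \refE{bas_sys}.
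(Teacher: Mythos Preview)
Your proof is correct and follows essentially the same route as the paper's: differentiate the equations $F_i=0$ for $i\ne k$ in $a_k$ and in $b_k$, observe that both resulting vectors $(\partial H_j/\partial a_k)_j$ and $(\partial H_j/\partial b_k)_j$ lie in the one-dimensional kernel of the $(n-1)\times n$ Jacobian $(\partial F_i/\partial H_j)_{i\ne k}$, and read off $1^\circ$ and $2^\circ$ from collinearity. Your Cramer description of the kernel for $2^\circ$ is a slightly more explicit packaging of what the paper phrases as ``resolving both systems with respect to the same $n-1$ unknowns,'' but the content is identical.
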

\begin{proof}
Since for $i\ne k$ the $i$th equation in \refE{non_sys} does not explicitly depend on $a_k,b_k$,  by differentiation of those equations first in $a_k$, and then in $b_k$ we obtain two identical systems of linear equations on the partial derivatives of  $H_1,\ldots,H_n$:
\[
   \sum_{j=1}^n \frac{\partial F_i}{\partial H_j} \frac{\partial H_j}{\partial a_k}=0,\quad i=1,\ldots,n,\ i\ne k
\]
and
\[
   \sum_{j=1}^n \frac{\partial F_i}{\partial H_j} \frac{\partial H_j}{\partial b_k}=0,\quad i=1,\ldots,n,\ i\ne k
\]
By the assumption on the ranks we observe first that the vectors $(\frac{\partial H_1}{\partial a_k},\ldots,\frac{\partial H_n}{\partial a_k})$ and $(\frac{\partial H_1}{\partial b_k},\ldots,\frac{\partial H_n}{\partial b_k})$ are linearly dependent which gives the statement $1^\circ$, and second that both systems can be resolved with respect to $n-1$ unknowns corresponding to the nonzero determinant in~$2^\circ$.This way we obtain the relations \refE{rels2}
with the same function $A_{jkj_k}$ in both relations for every $j\ne j_k$ (by the similarity of the systems). We set $A_{j_kkj_k}\equiv 1$.
\end{proof}
\begin{proposition}\label{P:comm}
$H_1,\ldots,H_n$ commute with respect to any Poisson bracket of the above form on $\C^{2n}$ where $p_k=0$ if $k$ does not satisfy the assumption of \refP{rels1}. In particular, they commute with respect to the brackets corresponding to every pair of coordinates $(a_k,b_k)$ for $k$ satisfying the assumption.
\end{proposition}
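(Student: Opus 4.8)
The plan is to compute the bracket straight from its definition,
\[
\{H_r,H_s\}=\sum_{k=1}^n p_k\left(\frac{\partial H_r}{\partial a_k}\frac{\partial H_s}{\partial b_k}-\frac{\partial H_s}{\partial a_k}\frac{\partial H_r}{\partial b_k}\right),
\]
and to show that every summand on the right-hand side vanishes. The summands with $p_k=0$ need no argument, and for the rest the standing hypothesis (that $p_k=0$ unless $k$ obeys the rank assumption of \refP{rels1}) guarantees that \refP{rels1} applies to the index $k$.

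Fix such a $k$. The $k$-th summand is $p_k$ times the $2\times 2$ minor formed by rows $r$ and $s$ of the $n\times 2$ matrix whose columns are the two vectors $(\frac{\partial H_1}{\partial a_k},\ldots,\frac{\partial H_n}{\partial a_k})$ and $(\frac{\partial H_1}{\partial b_k},\ldots,\frac{\partial H_n}{\partial b_k})$. By \refP{rels1} these two columns are proportional: relation $2^\circ$ exhibits both of them as scalar multiples --- with coefficients $A_{jkj_k}$ --- of the single vector $(A_{1kj_k},\ldots,A_{nkj_k})$, and in the non-degenerate case relation $1^\circ$ says more simply that $\frac{\partial H_j}{\partial a_k}=M_k\frac{\partial H_j}{\partial b_k}$ for all $j$. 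A matrix with proportional columns has all its $2\times 2$ minors equal to zero, so the $k$-th summand is zero; substituting $2^\circ$ makes the cancellation explicit,
\[
p_k\left(A_{rkj_k}\frac{\partial H_{j_k}}{\partial a_k}A_{skj_k}\frac{\partial H_{j_k}}{\partial b_k}-A_{skj_k}\frac{\partial H_{j_k}}{\partial a_k}A_{rkj_k}\frac{\partial H_{j_k}}{\partial b_k}\right)=0.
\]
Summing over $k$ gives $\{H_r,H_s\}=0$ for all $r,s$, and the ``in particular'' claim is the special case in which only one $p_k$ is nonzero.

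I do not expect a genuine obstacle here: the analytic work is already done in \refP{rels1}, and what remains is the one-line linear-algebra remark above. The only point deserving a word of care is that the index $j_k$ for which the $(n-1)\times(n-1)$ minor in $2^\circ$ is nonzero may depend on the point of $\C^{2n}$. This is harmless, since Poisson commutativity is a pointwise statement: near every point of the domain some such minor is nonzero, relation $2^\circ$ holds there with the corresponding $j_k$, the $k$-th summand vanishes near that point, and as the point is arbitrary it vanishes throughout the domain. (Working instead with $1^\circ$, one would treat separately the closed locus on which all the derivatives $\frac{\partial H_j}{\partial b_k}$ vanish, where the $k$-th summand is zero for the obvious reason.)
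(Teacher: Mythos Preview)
Your argument is correct and is essentially the paper's own: both reduce to a single index $k$, invoke \refP{rels1} to see that the two gradient vectors $(\partial H_j/\partial a_k)_j$ and $(\partial H_j/\partial b_k)_j$ are proportional, and conclude that the antisymmetrized product vanishes (the paper phrases this as ``the right-hand side is symmetric in $i,j$'', you as ``all $2\times 2$ minors of a rank-one matrix vanish''). Your extra remark on the pointwise choice of $j_k$ is a reasonable gloss that the paper leaves implicit.
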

\begin{proof}
It is obviously sufficient to prove the statement for $p_k\equiv 1$, $p_j\equiv 0$ ($j\ne k$) where $1\le k\le n$ is arbitrary.

By \refP{rels1}($1^\circ$) we have
\[
  \frac{\partial H_i}{\partial a_k}\frac{\partial H_j}{\partial b_k}=
  M_k\frac{\partial H_i}{\partial b_k}\frac{\partial H_j}{\partial b_k},
\]
and by \refP{rels1}($2^\circ$) we have
\[
  \frac{\partial H_i}{\partial a_k}\frac{\partial H_j}{\partial b_k}=
  A_{ikj_k}A_{jkj_k}\frac{\partial H_{j_k}}{\partial a_k}\frac{\partial H_{j_k}}{\partial b_k}.
\]
Both expressions on the right hand sides of those relations are symmetric in $i,j$, hence $\frac{\partial H_i}{\partial a_k}\frac{\partial H_j}{\partial b_k}-\frac{\partial H_j}{\partial a_k}\frac{\partial H_i}{\partial b_k}=0$.
\end{proof}
Coming to the map $\C^{2n}\to\C^n$ (resp., $\R^{2n}\to\R^n$) given by $(a,b)\to(H_1(a,b),\ldots,H_n(a,b))$ (where $a=(a_1,\ldots,a_n)$, $b=(b_1,\ldots,b_n)$), the pre-image of a point is given by Cartesian product of $n$ curves, each one given by the equation $F_i(H_1,\ldots,H_n,a_i,b_i)=0$ ($i=1,\ldots,n$) where $(H_1,\ldots,H_n)$ are regarded as  constants. The corresponding \emph{angle-type} coordinates are constructed in \cite{Skl} in the general set-up , and in \cite{DT} for the systems related to curves.

For a linear system of type \refE{bas_sys} with polynomial or analytical rank $n$ matrix $f(x,y)$, or for a system over the ring of germs of smooth functions, our \refP{comm} follows from the results of \cite{ER}. Under assumption that the function $f_{ij}$ is the same for all $i$ (for any $j$) the \refP{comm} follows also from \cite{BT1,BT2}. The Lagrange interpolation polynomial had been used in course of integrating certain classical non-linear systems \cite{BT1,BT2}, however without any explicit mentioning that its coefficients form an integrable system themselves. Similar ideas occur in \cite{KT} (see also references there). To the best of our knowledge the Hermit interpolation polynomial has been never mentioned with this relation. As well, \cite{BT2} is the only work we know where separation of variables is used for generating integrable systems. The present note develops this idea.

The author is grateful to O.Babelon, B.Dubrovin, S.Gusein-Zade, A.Khovanski, I.Krichever, S.Lando, A.Orlov, V.Rubtsov, D.Talalaev, V.Vasiliev for discussion.
\bibliographystyle{amsalpha}

\begin{thebibliography}{A}
\define\PL{Phys. Lett. B}
\define\NP{Nucl. Phys. B}
\define\LMP{Lett. Math. Phys. }
\define\JGP{JGP}
\redefine\CMP{Commun. Math. Phys. }
\define\JMP{J.  Math. Phys. }
\define\Izv{Math. USSR Izv. }
\define\FA{Funct. Anal. and Appl.}
\def\Pnas{Proc. Natl. Acad. Sci. USA}
\def\PAMS{Proc. Amer. Math. Soc.}
\def\UMNr{Russ. Math. Surv.}

\bibitem{Skl}
E.K. Sklyanin. \emph{Separation of variables. New trends}, Prog. Theor. Phys.
Suppl. 118 (1995), 35-60; arXiv:solv-int/9504001

\bibitem{DKN}
Dubrovin B.A., Krichever I.M., Novikov S.P. (1990) Integrable Systems. I. In: Arnol’d V.I., Novikov S.P. (eds) Dynamical Systems IV. Encyclopaedia of Mathematical Sciences, vol 4. Springer, Berlin, Heidelberg

\bibitem{BT1}
O.Babelon, D.Bernard, M.Talon. Introduction to Classical Integrable systems. Cambridge University Press, 2003.

\bibitem{BT2}
O.Babelon, M.Talon. \emph{Riemann surfaces, separation of variables and classical and quantum integrability}, arXiv: hep-th/0209071.

\bibitem{DT}
D.Talalaev, \emph{Riemann bilinear form and Poisson structure
in Hitchin-type systems}. Preprint ITEP-TH-22/03, ITEP, 2003.

\bibitem{ER}
B.Enriquez, V.Rubtsov. \emph{Commuting families in skew fields and quantization of Beauville's fibration}. Duke Math. J., V. 119, no. 2 (2003), 197-219.

\bibitem{KT}
K.Takasaki. \emph{Integrable systems whose spectral curve is the
graph of a function}, arXiv:nlin/0211021.


\bibitem{Buchstaber}
V.M.Buchstaber, V.Z.Enolski, D.V.Leykin. \emph{Multi-Dimensional Sigma-Functions}, 2012 , 267 pp., arXiv: 1208.0990.

\bibitem{Sh_DAN2018}
O.K.Sheinman. \emph{Some reductions of rank 2 and genera 2 and 3 Hitchin systems}.  Dokl. Math., 2018, to be published.

\bibitem{Sh_arx2017}
O.K.Sheinman. \emph{Some reductions of rank 2 and genera 2 and 3 Hitchin systems}, arXiv: 1709.06803
\end{thebibliography}


\end{document}
\bibitem{Grothen} A.Grothendieck. \emph{Sur la classification des fibres holomorphes sur la sph\`{e}re de Riemann}, Amer. J. Math. 79 (1957), 121-138.

\bibitem{Falt} G.Faltings. \emph{Stable $G$-bundles and projective connections}, J. Algebraic Geometry, V. 2 (1993), 507-568.

\bibitem{Sh_DGr}
Sheinman, O.K. \emph{Current algebras on Riemann surfaces}, De Gruyter Expositions in Mathematics, 58, Walter de Gruyter GmbH \& Co, Berlin-Boston, 2012, ISBN: 978-3-11-026452-4, 150 pp.
\bibitem{Sh_DAN_LOA&gr}
Sheinman, O.K. \emph{Lax operator algebras and gradings on semi-simple Lie algebras}. Doklady Mathematics, V. 461, no. 2, 2015, 143--145.

\bibitem{Sh_TrGr}
Sheinman, O.K.
\emph{Lax operator algebras and gradings on semi-simple Lie algebras}. Transformation Groups, Vol. 21, No. 1, 2016, 181--196. DOI 10.1007/s00031-015-9340-y.

\bibitem{Sh_TMPh_14}
Sheinman, O.K. Hierarchies of finite-dimensional Lax equations with a spectral parameter on a Riemann surface, and semi-simple Lie algebras. Theoret. and Math. Phys., 185:3 (2015), 1816--1831.

\bibitem{Sh_TrMIAN_15}
Sheinman, O.K. Semi-simple Lie algebras and Hamiltonian theory of
finite-dimensional Lax equations with the spectral parameter on a Riemann surface.  Proc. Steklov Inst. Math., 290 (2015), 178--188.

\bibitem{Sh_MMJ} Sheinman,O.K. \emph{Global current algebras and localization on Riemann surfaces}, Moscow Math. Journ., V. 15, n. 4, (2015), p. 833-846.

\bibitem{Sh_UMN_2015}
Sheinman,O.K. \emph{Lax operator algebras and integrable systems}. Russian Math. Surveys, 71:1 (2016), 109--156.
\bibitem{Tyur64}
Tyurin, A.N. \emph{On the classification of rank 2 vector bundles over an algebraic curve of arbitrary genus}, Izv. AN SSSR. Ser. Math. (1964) 28:1 p. 21–52. Also published in: Tyurin, A.N., Vector Bundles, Collected Works, V.1,
F.Bogomolov, A.Gorodentsev, V.Pidstrigach, M.Reid, and N.Tyurin eds., G{\"o}ttingen 2008
\bibitem{Tyur65}
Tyurin, A.N. \emph{Classification of vector bundles over an arbitrary genus algebraic curve}. Amer. Math. Soc. Transl. Ser. 2, 63 (1967), 245–-279.
\bibitem{Tyur66}
Tyurin, A.N. \emph{On classification of rank $n$ vector bundles over an arbitrary genus algebraic curve}. Amer. Math. Soc. Transl. Ser. 2, 73 (1968), 196--211.

\bibitem{Tyur2d}
Tyurin, A.N. \emph{On the classifications of rank 2 vector bundles on an algebraic curve of an arbitrary genus.}, Izv. AN SSSR, ser. math. 28:1 (1964), 21-52  (see also Collected papers).

\bibitem{TyurUMN}
Tyurin, A.N. \emph{Geometry of modules of vector bundles}, Russ Math. Surv., 29:6(180) (1974),  59-88 (see also Collected papers).

\bibitem{Vin} E.B.Vinberg, V.V.Gorbatsevich, A.L.Onischik. \emph{Structure of Lie groups and Lie algebras. In: Itogi nauki i techniki, vol. 41, pp. 5-258}.
\bibitem{Weil}
A.Weil. \emph{Generalisation des fonctions abeliennes}, J.Math. Pures et Appl., v.17 (1938), 47-87.

\end{thebibliography}

\end{document}